\documentclass[11pt]{article}
\usepackage[utf8]{inputenc}
\usepackage[margin=1in]{geometry}
\usepackage{amsthm}
\usepackage{amsmath}
\usepackage{amssymb}
\usepackage{url}
\usepackage{hyperref}

\newtheorem{lemma}{Lemma}
\newtheorem{example}{Example}

\newcommand{\floor}[1]{\lfloor #1 \rfloor}

\title{Nonlinear Arithmetic with SMTLIB Division is Undecidable}

\author{Dejan Jovanovi\'{c}\\
  Amazon Web Services\\
  \texttt{dejajov@amazon.com}
}

\begin{document}

\maketitle

\begin{abstract}
We show that the nonlinear real arithmetic theory (NRA) as defined in the SMTLIB standard is undecidable. The undecidability arises from the treatment of division by zero as an uninterpreted function, which allows encoding integer arithmetic problems into NRA formulas.

\medskip
\noindent\textbf{Note.} This is a short note that was written some time ago and is being archived here for reference. The reader interested in this topic is encouraged to consult the more comprehensive and authoritative treatment of division semantics in polynomial solvers by Brown \cite{brown2025semantics}, which subsumes the observations in this note.
\end{abstract}



%
%

\section{Undecidability of NRA}
\label{sect:undecidability}

As shown by Tarski \cite{tarski1998decision}, the theory of real closed fields is
decidable. Formulas of the theory of real closed fields are equalities and
inequalities over polynomials (polynomial constraints), commonly referred to in
the SMT community as nonlinear real arithmetic. The SMTLIB standard
\cite{barrett2010smt} formally defines the theory of nonlinear real arithmetic
(NRA) as the theory of the standard model of the reals $\mathbb{R}$. However, in
addition to polynomial constraints, the signature of NRA includes division. The
interpretation of division is as expected at all points except when the
divisor is $0$. When the divisor is $0$, division in NRA is uninterpreted in the sense that
any interpretation of $\frac{x}{0}$ is valid as long as division behaves as expected
when the divisor is nonzero. In other words, division is axiomatized as
\begin{align*}
  (\forall x, y \in \mathbb{R}) \;\; (y \neq 0) \Rightarrow (x = \frac{x}{y} \cdot y)\enspace.
\end{align*}

We show that the inclusion of division as described above leads to
undecidability of the SMTLIB theory NRA. This is perhaps unexpected for a theory whose decidable fragment (polynomial constraints) is a landmark result in mathematical logic.

\begin{lemma}
Real arithmetic with one unary uninterpreted function is undecidable.
\end{lemma}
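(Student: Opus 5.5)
We reduce from the undecidability of the first-order theory of the integers, in the form of Hilbert's tenth problem (or full first-order arithmetic over $\mathbb{Z}$). The idea is to use the single unary uninterpreted function $f$ to define the set of integers inside a real-arithmetic formula. Once $\mathbb{Z}$ is definable, the structure $(\mathbb{R},+,\cdot,0,1,f)$ interprets $(\mathbb{Z},+,\cdot)$, so its theory is undecidable.

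Concretely, we characterize $\mathbb{Z}$ through a floor-like function. Consider the formula
\begin{align*}
  \Phi_f \;\equiv\; (\forall x)\,\big(f(x) \le x < f(x)+1\big) \;\wedge\; f(0)=0 \;\wedge\; (\forall x)\,\big(f(x+1) = f(x)+1\big) \;\wedge\; (\forall x)\,\big(f(f(x)) = f(x)\big).
\end{align*}
The aim is to define $\mathrm{Int}(y) \equiv f(y)=y$ and prove that, in every model of $\Phi_f$, $\mathrm{Int}(y)$ holds exactly when $y \in \mathbb{Z}$. The direction $\mathbb{Z}\subseteq\mathrm{Int}$ follows by induction from $f(0)=0$ and the shift axiom $f(x+1)=f(x)+1$, applied in both directions (set $x=-n-1$ for the negative integers).

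The converse is the main obstacle, since $\Phi_f$ has to rule out every non-integer fixed point. The plan is to define integrality differently, as $\mathrm{Int}(y)\equiv (\exists x)\,(y=f(x))$, and to add the axiom that $f$ is weakly monotone, $(\forall x,z)\,(x\le z \Rightarrow f(x)\le f(z))$. Now suppose $c=f(x)$ for some $c\notin\mathbb{Z}$, and put $n=\floor{c}$, so that $n<c<n+1$. Since $f(x)=c>n=f(n)$, monotonicity gives $x>n$. Applying the shift axiom to $x-1$ gives $f(x-1)=c-1<n$, and since $f(n)=n$, monotonicity gives $x-1<n$. Hence $x\in(n,n+1)$. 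The floor-like bounds, with $f(x)\le x$, give $c\le x$. The remaining task is to force $f$ to be constant on $[n,n+1)$ with value $n$. This needs one more axiom, $(\forall x)\,\big(f(x)\le x < f(x)+1 \wedge (\forall z)(f(z)=z \wedge z\le x \Rightarrow z\le f(x))\big)$. That axiom is itself circular, so instead we use the stronger axiom $(\forall x)(\forall z)\,\big(z\le x<z+1 \wedge f(z)=z \Rightarrow f(x)=z\big)$ and check that it pins down $f=\floor{\cdot}$ on all of $\mathbb{R}$. Given $f(n)=n$ for every integer $n$, this axiom yields $f(x)=\floor{x}$ directly. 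The range of $f$ is then exactly $\mathbb{Z}$.

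Finally, $\Phi_f$ is satisfiable, since $\floor{\cdot}$ is a model. For any arithmetic sentence $\psi$ over $\mathbb{Z}$, let $\psi^{\mathrm{Int}}$ be $\psi$ with every quantifier relativized to $\mathrm{Int}$. Then $\psi$ is true in $\mathbb{Z}$ if and only if $\Phi_f \Rightarrow \psi^{\mathrm{Int}}$ is valid in real arithmetic with $f$ uninterpreted. In the existential case this is equivalently the satisfiability of $\Phi_f\wedge\psi^{\mathrm{Int}}$, with the Diophantine variables existentially quantified. Undecidability of Hilbert's tenth problem then transfers to this theory.
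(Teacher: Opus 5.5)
Your proposal is correct and takes essentially the paper's route: force $f$ to be $\floor{\cdot}$ with universally quantified axioms built around the shift law $f(x+1)=f(x)+1$, use $f$ to define $\mathbb{Z}$, and relativize a Diophantine (or arithmetic) formula to it. The paper pins $f$ down with just the shift axiom and $(0\le x<1)\Rightarrow f(x)=0$, so your floor bounds, idempotence and monotonicity axioms (and the detour through them) are harmless but unnecessary: your final axiom together with $f(0)=0$ and the shift law already forces $f=\floor{\cdot}$.
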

\begin{proof}
Assume one uninterpreted function symbol $f$. We can axiomatize $f$ as the floor
function $\floor{\cdot}$ as follows
\begin{align}
(\forall x \in \mathbb{R}) \;\; & f(x) + 1 = f(x + 1)\enspace,\label{eq:floor1}\\
(\forall x \in \mathbb{R}) \;\; & (0 \leq x < 1) \Rightarrow (f(x) = 0)\enspace.\label{eq:floor2}
\end{align}
We denote the above axioms as $A$. By induction on $|\floor{x}|$, it is easy to
show that in the standard model of the reals, $f(x) = \floor{x}$. Let $F(x_1,
\ldots, x_n)$ be an arbitrary formula over integer variables $x_i$. Then $F$ is equisatisfiable with the following formula over real
variables
\begin{align*}
  A \wedge  \bigwedge_{i=1}^n (x_i = f(x_i)) \wedge F(x_1, \ldots, x_n)\enspace.
\end{align*}
If nonlinear real arithmetic with one uninterpreted function $f$ were decidable,
using the above transformation, we could decide
satisfiability of an arbitrary integer formula $F$. This is impossible since
integer arithmetic is undecidable (Hilbert's 10th problem
\cite{matiyasevich1993hilbert}).
\end{proof}

\begin{lemma}
Real arithmetic with uninterpreted division by zero is undecidable.
\end{lemma}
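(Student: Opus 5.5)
The plan is to reduce to the previous lemma: show that division by zero gives us a unary uninterpreted function $f$, and then repeat the floor encoding. The natural candidate is $f(x) = \frac{x}{0}$. By the SMTLIB axiomatization, the value of $\frac{x}{0}$ is unconstrained for every $x$. Hence any function $g:\mathbb{R}\to\mathbb{R}$ can serve as the interpretation of $x \mapsto \frac{x}{0}$, while division by nonzero divisors keeps its standard meaning. So quantified real arithmetic with division, over the standard model of $\mathbb{R}$ with division-by-zero left free, expresses everything that real arithmetic with one unary uninterpreted function expresses.

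Concretely, let $F(x_1,\ldots,x_n)$ be an integer formula. I will form
\begin{align*}
  A' \wedge \bigwedge_{i=1}^n \Big(x_i = \frac{x_i}{0}\Big) \wedge F(x_1,\ldots,x_n)\enspace,
\end{align*}
where $A'$ is the axiom set $A$ from the proof of the previous lemma with every occurrence of $f(t)$ replaced by $\frac{t}{0}$:
\begin{align*}
  (\forall x \in \mathbb{R}) \;\; & \frac{x}{0} + 1 = \frac{x+1}{0}\enspace,\\
  (\forall x \in \mathbb{R}) \;\; & (0 \leq x < 1) \Rightarrow \Big(\frac{x}{0} = 0\Big)\enspace.
\end{align*}

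The argument then has two directions.
\begin{itemize}
\item If $F$ has an integer solution, interpret $\frac{x}{0}$ as $\floor{x}$. This is a legitimate interpretation of division, since the only constraint on division concerns nonzero divisors. Under it $A'$ holds, and the integer solution satisfies the conjuncts $x_i = \frac{x_i}{0}$ and $F$.
\item Conversely, in any model, $A'$ forces $x \mapsto \frac{x}{0}$ to equal $\floor{x}$, by the same induction as before. Then $x_i = \frac{x_i}{0}$ forces each $x_i$ to be an integer, so $F$ has an integer solution.
\end{itemize}
Decidability of NRA with SMTLIB division would therefore decide Hilbert's 10th problem, which is a contradiction.

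The main obstacle is justifying that the term $0$ in the divisor really ranges over a single free function of the numerator. In particular, the semantics must allow $\frac{t}{0}$ to depend on the value of $t$ in an arbitrary way, and the universal quantifiers must be admissible in the theory. If one insists on quantifier-free formulas, I would instead note that Hilbert's 10th problem needs only existential integer constraints. The axioms, however, are universal. In that case I would try to replace them with a finite existential encoding, e.g.\ by asserting $y_i = \frac{x_i}{0}$ with auxiliary constraints. I expect the quantified version to be the intended reading, consistent with the previous lemma, which also used quantified axioms.
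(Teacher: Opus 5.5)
Your proposal is correct and is exactly the paper's argument: you instantiate the function $f$ of the previous lemma with $x \mapsto \frac{x}{0}$, which is unconstrained under the SMTLIB semantics, so the floor axioms and the integrality conjuncts carry over verbatim, as in the paper's example. One caution about your fallback: the universal axioms cannot be traded for an existential encoding, because the quantifier-free fragment with division by zero is decidable (replace each $\frac{s}{t}$ by a fresh variable with the division and functional-consistency constraints, then apply Tarski), so the quantified reading is not just the intended one but necessary.
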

\begin{proof}
Using the previous lemma, we can use division by zero $\frac{x}{0}$ as the function
$f$, leading to undecidability.
\end{proof}

\begin{example}
The following example shows how to encode satisfiability of $a^3 + b^3 = c^3$
over integer variables $a$, $b$, and $c$, as a satisfiability problem in NRA.
\begin{align*}
(\forall x \in \mathbb{R}) \;\; & \frac{x}{0} + 1 = \frac{x + 1}{0}\enspace,\\
(\forall x \in \mathbb{R}) \;\; & (0 \leq x < 1) \Rightarrow (\frac{x}{0} = 0)\enspace,\\
& (\frac{a}{0} = a) \wedge (\frac{b}{0} = b) \wedge (\frac{c}{0} = c) \wedge (a^3 + b^3 = c^3)\enspace.
\end{align*}

\end{example}

\section{Discussion}
\label{sect:discussion}

\paragraph{Decision procedures.}

Although mostly focusing on linear arithmetic, there is a large body of work in
SMT on efficiently deciding quantified real arithmetic constraints. Notable
examples are approaches that apply complete quantifier elimination in a lazy
\cite{monniaux2010quantifier} or model-driven fashion (e.g.,
\cite{bjorner2015playing,reynolds2017solving}). These methods are appealing since,
in principle, they could be extended to decide quantified nonlinear constraints
by using, e.g., quantifier elimination techniques from CAD. Unfortunately, as
explained above, this is not possible in the presence of division by zero as an uninterpreted function.

\paragraph{Benchmarks.}

The NRA category of the SMTLIB library contains thousands of benchmarks. While
most use division only with constant divisors (which is decidable), benchmarks
with non-constant divisors fall into the undecidable fragment described in this
note.

\paragraph{Potential solutions.}

Two possible approaches to address this undecidability are:
\begin{enumerate}
\item Make division total by assigning a specific value to division by zero, similar to the treatment in the bitvector theory \cite{barrett2017qfbv}. The main challenge is choosing an appropriate value for $\frac{x}{0}$.
\item Move division with non-constant divisors to a separate theory (UFNRA) and keep NRA restricted to polynomial constraints plus division by non-zero constants. The downside is that this would require reclassifying many existing benchmarks.
\end{enumerate}

\label{sect:bib}
\bibliographystyle{plain}
\bibliography{nra}

\end{document}